\definecolor{yellow}{rgb}{.90,.95,1}
\let\NAT@parse\undefined
\newtheorem{theorem}{Theorem}
\newtheorem{problem}[theorem]{Problem}
\newtheorem{definition}{Definition}
\newtheorem{proposition}[definition]{Proposition}
\newtheorem{corollary}[definition]{Corollary}
\newtheorem{lemma}[theorem]{Lemma}
\newtheorem{remark}{Remark}
\title{ \huge Distributed Resource Allocation for Epidemic control}
\author{Chinwendu Enyioha, Ali Jadbabaie, Victor Preciado and George Pappas
\thanks{*This research was supported by ARO MURI W911NF-12-1-0509  and AFOSR Complex Networks Program}
\thanks{All authors are with the Department of Electrical and Systems Engineering, University of Pennsylvania, Philadelphia PA, USA 19104
        {\tt\small \{cenyioha, jadbabai, preciado, pappasg\} @seas.upenn.edu}
        }%
}
\begin{document}

\maketitle

\begin{abstract}

We present a distributed resource allocation strategy to control an epidemic outbreak in a networked population based on a Distributed Alternating Direction Method of Multipliers (D-ADMM) algorithm. We consider a linearized Susceptible-Infected-Susceptible (SIS) epidemic spreading model in which agents in the network are able to allocate vaccination resources (for prevention) and antidotes (for treatment) in the presence of a contagion. We express our epidemic control condition as a spectral constraint involving the Perron-Frobenius eigenvalue, and formulate the resource allocation problem as a Geometric Program (GP). Next, we separate the network-wide optimization problem into subproblems optimally solved by each agent in a fully distributed way. We conclude the paper by illustrating performance of our solution framework with numerical simulations.
\end{abstract}
\section{Introduction}
In this paper, we consider the problem of controlling the outbreak of an epidemic process in a networked population in the absence of a social planner. We propose a distributed solution that enables local computation of optimal investment in vaccines and/or antidotes at each node, to respectively reduce its infection rate and increase its recovery rate; and contain the spread of an outbreak. Our distributed solution is based on an iterative algorithm -- a \textit{Distributed Alternating Direction Method of Multipliers} (D-ADMM) algorithm.
The distributed solution framework presented here is based on a Geometric Programming formulation of the epidemic control problem, and its convex characterization first proposed in \cite{preciado2013optimalgp}.

The fairly recent outbreaks of the Middle Eastern Respiratory Syndrome (MERS) virus \cite{de2013middle} and Ebola virus \cite{du2014ebola} have rekindled interest in analysis and control of epidemic outbreaks in networked populations. The analysis of an SIS epidemic process in arbitrary (undirected) networks was first studied by Wang et al. \cite{WCWF03} using a discrete-time model. Key in \cite{WCWF03} was the establishment of an epidemic threshold based on the spectral radius of the network adjacency matrix. Ganesh et al. in \cite{GMT05} studied a continuous-time SIS epidemic model where the relation between the spectral radius of the network adjacency matrix and the speed of spreading was further corroborated. A similar result was proposed in \cite{van2009virus}.

In the event of an epidemic outbreak, of interest is how to optimally allocate vaccines and/or antidotes across the the population to control the spread of the epidemic. Sometimes, the focus is to hasten the rate at which the contagion is contained; or to minimize the associated cost of containing the outbreak. Significant research attention has been given to this problem in the last decade. In \cite{cohen2003efficient}, the authors proposed a greedy immunization strategy -- immunizing acquaintances, and showed it outperforms a case where vaccinations are applied randomly across nodes in the network. In \cite{chung2009distributing} the authors presented an analysis of immunization strategy using the PageRank vector of the network adjacency matrix. Wan et al. in \cite{WRS08} presented a method to design control strategies in undirected networks using eigenvalue sensitivity analysis. A convex framework to compute the optimal allocation of vaccines in undirected networks using Semidefinite programming (SDP) techniques was discussed in \cite{PZEJP13}. In \cite{preciado2013optimalgp}, we generalized the convex formulation via Geometric Programming techniques to weighted, directed, strongly connected networks to compute the cost and speed optimal allocation of vaccines and/or antidotes in directed strongly (and not necessarily strongly) connected networks.

In the absence of a social planner (or central coordinator), centralized computation of optimal investments in vaccines and/or antidotes as was done in \cite{preciado2013optimalgp} poses a challenge. Furthermore, the network might be too large for a centralized optimization scheme. These motivate the need for an optimal, decentralized framework to achieve the network-wide objective of controlling an outbreak. In this paper, we propose a fully distributed ADMM algorithm that enables local computation of optimal investment in vaccines and/or antidotes at each node to control the spread of an outbreak in a directed, network comprising heterogeneous agents.  While the discussion in this paper focuses on strongly connected networks, the results are easily amenable to networks that are not necessarily strongly connected.

 We formulate the epidemic control problem as a distributed resource allocation problem, in that agents in the network carry out local computations informed by interactions with neighboring agents to determine their optimal investment in vaccines and/or antidotes, and must cooperate to complete a global task.  Our work is similar in spirit to  \cite{rantzer2011distributed} where tests for distributed control of positive systems were presented. A notable difference and major challenge between the particular problems addressed in \cite{rantzer2011distributed} and this paper is that our problem in its natural form is a nonconvex optimization problem. A game-theoretic formulation has also been studied in \cite{trajanovski2014decentralized} where different equilibria were analyzed for undirected networks, contrasting our distributed solution that applies to directed networks with non-identical agents.

Distributed optimization methods exist in the literature and have applied to several classes of problems; see, for example \cite{jakovetic2011cooperative,nedic2009distributed, tsitsiklis1984problems}. 
We employ a D-ADMM, a dual-based method, to solve our constrained resource allocation problem. Dual-based methods are usually used when the local optimization of each agent can be done efficiently. The D-ADMM, as we will illustrate, first decomposes our original resource allocation problem into two sub-problems. These sub-problems are then sequentially solved in parallel by each agent, with the associated dual variables updated at each iteration of the algorithm, allowing for a fully distributed implementation. In contrast to \cite{wei2012distributed}, where a D-ADMM algorithm was proposed for an unconstrained optimization problem on an undirected network, our problem in its original form is a  constrained optimization problem on a directed network, where the need for satisfaction of a global constraint is necessary. 

The rest of the paper is organized as follows -- In section \ref{sec:model}, we present the notation, spreading model considered, as well as state the resource allocation problem. Section \ref{sec:convex_character} comprises a  GP formulation of the problem, alongside its convex characterization. The proposed distributed solution is presented in Section \ref{sec:distributed}; with results from our solution illustrated in Section \ref{sec:experiments}. A summary and concluding remarks follow in Section \ref{sec:summary}.

\section{Preliminaries, Model and Problem Statement}
\label{sec:model}
\subsection{Graph Theory}
 
We define a weighted graph as  $\mathcal{G}\triangleq\left(\mathcal{V},\mathcal{E},\mathcal{W}\right)$, where $\mathcal{V}\triangleq\left\{ v_{1},\dots,v_{n}\right\} $
is a set of $n$ nodes, $\mathcal{E}\subseteq\mathcal{V}\times\mathcal{V}$
is a set of ordered pairs of nodes called edges, and the function $\mathcal{W}:\mathcal{E}\rightarrow\mathbb{R}_{++}$
associates \textit{positive} real weights to the edges in $\mathcal{E}$.
The node pair $\left(v_{j},v_{i}\right)$ form an undirected edge in $\mathcal{G}$. If $\mathcal{G}$ is a directed network, the pair $\left(v_{j},v_{i}\right)$ is an oriented edge from node $v_j$ to node $v_i$.  For an undirected graph $\mathcal{G}$, we define the neighborhood of node $v_i$ as $N_{i} \triangleq\left\{ j:\left(v_{j},v_{i}\right)\in\mathcal{E}\right\} $. When the graph $\mathcal{G}$, is directed, we define the in-neighborhood of node $v_i$ (that is, the set of nodes with edges pointing towards $v_{i}$), as $N_{i}^{in}\triangleq\left\{ j:\left(v_{j},v_{i}\right)\in\mathcal{E}\right\} $. 
A directed path from $v_{i_{1}}$ to $v_{i_{l}}$ in $\mathcal{G}$ is an ordered set of vertices $\left(v_{i_{1}},v_{i_{2}},\ldots,v_{i_{l+1}}\right)$ such that $\left(v_{i_{s}},v_{i_{s+1}}\right)\in\mathcal{E}$ for $s=1,\ldots,l$. A directed graph $\mathcal{G}$ is \emph{strongly connected} if, for every pair of nodes $v_{i},v_{j}\in\mathcal{V}$, there is a directed path from $v_{i}$ to $v_{j}$.

We denote the adjacency matrix of a directed graph $\mathcal{G}$  by $A = [a_{ij}] = \mathcal{W}(v_j,v_i)$ if the edge $(v_j,v_i)\in \mathcal{E}$ points from $v_j$ to $v_i$, and $a_{ij} = 0$ otherwise. Given an $n\times n$ matrix $M$, we denote
by $\mathbf{v}_{1}\left(M\right),\ldots,\mathbf{v}_{n}\left(M\right)$
and $\lambda_{1}\left(M\right),\ldots,\lambda_{n}\left(M\right)$
the set of eigenvectors and corresponding eigenvalues of $M$, respectively,
where we order them in decreasing order of their real parts, i.e.,
$\mathbb{R}(\lambda_{1}) \geq \mathbb{R}(\lambda_2)\geq\ldots\geq \mathbb{R}(\lambda_n)$.
We respectively call $\lambda_{1}\left(M\right)$ and $\mathbf{v}_{1}\left(M\right)$
the dominant eigenvalue and eigenvector of $M$; and denote by  $\rho\left(M\right)$, the spectral radius of $M$, which is the maximum modulus across all eigenvalue of $M$. 
Vectors are denoted using boldface letters
and matrices using capital letters. The letter $I$ denotes the identity matrix and $\mathbb{R}(z)$ denotes
the real part of $z\in\mathbb{C}$. 

\subsection{Spreading Model}
The spreading model considered in this paper is a continuous-time heterogeneous (SIS) epidemic model and closely follows the development in \cite{enyiohathesis}. In the heterogeneous model, we assume a networked Markov process in which nodes in the network can be in one of two states -- susceptible or infected. Over time, based on each agent's  infection rate $\beta_i$, recovery rate $\delta_i$ and interactions with neighboring agents and their states, the (infected or susceptible) state of each agent evolves. We assume the agents are non-identical in the sense that each agent has distinct infection rate $\beta_i$ and recovery rate $\delta_i$; and that these rates can be controlled by either injecting vaccines or antidotes.

The evolution of the spreading model is as follows. At each time-step, the state of node $v_i$ is a binary random variable $X_i(t) \in \{0,1\}$. The states $X_i(t) = 0$ and $X_i(t) = 1$ respectively  indicate that node $v_i$ is susceptible and infected. Let the vector of states be $X(t) = (X_1(t),\hdots, X_n(t))^T$. Suppose node $v_i$ is in the susceptible state at time $t$. Its probability of switching to the infected state depends on its infection rate $\beta_i$, the state of its neighbors $\{X_j(t), \ \text{for } j\in \mathcal{N}_i^{in}\}$. Mathematically, the probability of node $v_i$ switching from susceptible to infected state can be expressed as
\begin{multline}
\Pr\left(X_{i}(t+\Delta t)=1|X_{i}(t)=0,X(t)\right)=\\
\sum_{j\in\mathcal{N}_{i}^{in}}a_{ij}\beta_{i}X_{j}\left(t\right)\Delta t+o(\Delta t),
\end{multline}
where $\Delta t>0$ is considered an asymptotically small time interval. Similarly, suppose node $v_i$ is in the infected; the probability of node 
$v_i$ returning to the susceptible state depends on its recovery rate $\delta_i$, as well as the states of its neighbors with incoming connections. More formally, this is given by
\begin{equation}
\Pr(X_{i}(t+\Delta t)=0|X_{i}(t)=1,X(t))=\delta_{i}\Delta t+o(\Delta t).
\end{equation}
Given an $n-$agent network and two possible states each agent can be in, the Markov process has $2^n$ states. The exponentially increasing state space of this model poses computational challenges and makes this model difficult to analyze for large networks. To overcome this challenge, we
use a mean-field approximation of its dynamics \cite{van2009virus}. By applying mean-field theory to the Markov chain, the dynamics
of infection spread can be approximated using a system of
differential equations. Suppose we define $p_{i}\left(t\right)\triangleq\Pr\left(X_{i}\left(t\right)=1\right)=\mathbb{E}\left(X_{i}\left(t\right)\right)$,
i.e., the marginal probability of node $v_{i}$ being infected at
time $t$. Then, the Markov differential equation \cite{van2009virus}
for the state $X_{i}\left(t\right)=1$ is:
\begin{equation}
\frac{dp_{i}\left(t\right)}{dt}=\left(1-p_{i}\left(t\right)\right)\beta_{i}\sum_{j=1}^{n}a_{ij}p_{j}\left(t\right)-\delta_{i}p_{i}\left(t\right).\label{eq:HeNiSIS dynamics}
\end{equation}
Since $i=1,\ldots,n$, we can represent \eqref{eq:HeNiSIS dynamics} more compactly by a system of nonlinear differential
equation with dynamics
\begin{equation}\label{eq:dynamics}
\frac{d\mathbf{p}(t)}{dt} = (BA - D)\mathbf{p}(t) - P(t)BA\mathbf{p}(t),
\end{equation}
where $\mathbf{p}\left(t\right)\triangleq\left(p_{1}\left(t\right),\ldots,p_{n}\left(t\right)\right)^{T}$,
$B\triangleq\mbox{diag}(\beta_{i})$, $D\triangleq\mbox{diag}\left(\delta_{i}\right)$,
and $P\left(t\right)\triangleq\mbox{diag}(p_{i}\left(t\right))$.
This ODE presents an equilibrium point at $\mathbf{p}^{*}=0$,
called the disease-free equilibrium. Readers are referred to \cite{van2009virus, bose2013cost} for a detailed treatment and presentation of the mean-field approximation noted above.
\begin{proposition}\label{prop:stab}
Consider the nonlinear dynamical system in \eqref{eq:dynamics}, where the network adjacency matrix $A \geq 0$, $B = \text{diag}(\beta_i) \geq 0$ and $D = \text{diag}(\delta_i) > 0$. Suppose the real part of the largest eigenvalue of $BA - D$ satisfies 
\begin{equation}\label{stability_constraint}
\mathbb{R}\left(\lambda_1(BA - D)  \right) \leq -\varepsilon,
\end{equation}
for $\varepsilon > 0$, the disease-free equilibrium is globally exponentially stable; that is, $\|\mathbf{p}(t) \| \leq \|\mathbf{p}(0)\|  K e^{-\varepsilon t}$, for $K >0$.
\end{proposition}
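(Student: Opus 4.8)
\emph{Proof proposal.} The plan is to dominate the nonlinear flow by its linearization at the origin and to exploit the sign structure inherited from the nonnegativity of $A$, $B$ and $D$; throughout I work on the physically meaningful set $\mathbf{p}(0)\in[0,1]^n$. \textbf{First}, I would verify that the nonnegative orthant is forward invariant. Reading off the $i$th component of \eqref{eq:dynamics}, whenever $p_i=0$ both the recovery term $-\delta_i p_i$ and the nonlinear correction vanish, leaving $\dot{p}_i=\beta_i\sum_j a_{ij}p_j\ge 0$ since $A,B\ge 0$. By the Nagumo sub-tangentiality condition this guarantees $\mathbf{p}(t)\ge 0$ for all $t\ge 0$; an analogous check at $p_i=1$ (where $\dot p_i=-\delta_i<0$) keeps the trajectory inside $[0,1]^n$, so the model stays well defined.

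\textbf{Next}, on this invariant set $P(t)=\diag{p_i(t)}$ is nonnegative and, because $B,A\ge 0$, the correction satisfies $P(t)BA\mathbf{p}(t)\ge 0$ entrywise, which yields the differential inequality
\begin{equation*}
\dot{\mathbf{p}}(t)=(BA-D)\mathbf{p}(t)-P(t)BA\mathbf{p}(t)\le (BA-D)\mathbf{p}(t),
\end{equation*}
understood componentwise. The key structural observation is that $M:=BA-D$ is Metzler: every entry of $BA$ is nonnegative and the diagonal matrix $-D$ does not alter the off-diagonal entries. Consequently $e^{Mt}\ge 0$ entrywise and $M$ generates a monotone (positive) semigroup, so the vector comparison lemma applied against the linear system $\dot{\mathbf{q}}=M\mathbf{q}$, $\mathbf{q}(0)=\mathbf{p}(0)$, gives $0\le \mathbf{p}(t)\le \mathbf{q}(t)=e^{Mt}\mathbf{p}(0)$.

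\textbf{Then} I would close the argument through the spectrum of $M$. Because $M$ is Metzler, its eigenvalue of largest real part $\lambda_1(M)$ is real (Perron--Frobenius for Metzler matrices), and by hypothesis $\lambda_1(M)\le-\varepsilon<0$, so $M$ is Hurwitz and $\norm{e^{Mt}}\le K e^{-\varepsilon t}$ for some $K>0$. Since $0\le \mathbf{p}(t)\le \mathbf{q}(t)$ and the Euclidean norm is monotone on the nonnegative orthant, this chains into
\begin{equation*}
\norm{\mathbf{p}(t)}\le \norm{\mathbf{q}(t)}\le \norm{e^{Mt}}\,\norm{\mathbf{p}(0)}\le K e^{-\varepsilon t}\norm{\mathbf{p}(0)},
\end{equation*}
which is exactly the claimed global exponential stability of the disease-free equilibrium.

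\textbf{The main obstacle} is the \emph{exact} rate in the last step: a crude spectral-abscissa bound only gives $\norm{e^{Mt}}\le K_\eta e^{-(\varepsilon-\eta)t}$ for each $\eta>0$, since Jordan blocks can contribute polynomial factors $t^k$. To recover the stated $e^{-\varepsilon t}$ I would lean on the Metzler structure: under the paper's strong-connectivity assumption $A$, and hence $BA$, is irreducible, so $\lambda_1(M)$ is simple and strictly dominant, and any polynomial growth comes only from eigenvalues of strictly smaller real part and is absorbed into $K$. A cleaner route that avoids Jordan analysis entirely is the linear Lyapunov function $V(\mathbf{p})=\mathbf{u}^\top\mathbf{p}$ with $\mathbf{u}>0$ the left Perron eigenvector of $M$; then on the invariant set $\dot V=\lambda_1(M)\,V-\mathbf{u}^\top P BA\mathbf{p}\le -\varepsilon V$, so $\mathbf{u}^\top\mathbf{p}(t)\le e^{-\varepsilon t}\mathbf{u}^\top\mathbf{p}(0)$, and passing between norms gives $\norm{\mathbf{p}(t)}\le (\max_i u_i/\min_i u_i)\,e^{-\varepsilon t}\norm{\mathbf{p}(0)}$ directly.
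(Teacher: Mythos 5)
Your proposal is correct and follows essentially the same route as the paper, which simply cites \cite{PZEJP13,enyiohathesis} for the observation that the linear system $\dot{\mathbf{p}}=(BA-D)\mathbf{p}$ upper-bounds the nonlinear dynamics so that the spectral condition suffices for exponential stability. You merely supply the details the paper delegates to its references (forward invariance of $[0,1]^n$, the Metzler/comparison argument, and the care needed to obtain the exact rate $e^{-\varepsilon t}$ via simplicity of the Perron eigenvalue or a linear Lyapunov function), all of which are sound.
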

\begin{proof}
See \cite{PZEJP13,enyiohathesis} for proof; where we showed that the linear dynamical system $\dot{\mathbf{p}}\left(t\right)=\left(BA -D\right)\mathbf{p}\left(t\right)$
upper-bounds the dynamics in (\ref{eq:HeNiSIS dynamics});
thus, the spectral result in \eqref{stability_constraint} is a sufficient
condition for the mean-field approximation of \eqref{eq:dynamics} to be globally exponentially stable.
\end{proof}

\subsection{The Distributed Resource Allocation Problem}
We present the epidemic control problem as a resource allocation problem in which vaccines and antidotes are respectively applied at each node to reduce their infection rates and increase their recovery rates within feasible intervals $0 < \underline{\beta}_i \leq \beta_i \leq \overline{\beta}_i$ and  $0 < \underline{\delta}_i \leq \delta_i \leq \overline{\delta}_i$. The specific values of $\beta_i$ and $\delta_i$ depend on the amount of vaccines and antidotes allocated at node $v_i$. We assume that vaccines applied at node $v_i$ have an associated cost $f_i(\beta_i)$; and antidotes have a cost $g_i(\delta_i)$. The function $f_i(\beta_i)$, assumed to be convex and monotonically decreasing with respect to $\beta_i \ \ \forall \ i=1,\hdots, n$, represents the cost of modifying the infection rates of node $v_i$ to some $\beta_i$ within the feasible interval to stabilize the spreading dynamics. Similarly, the function $g_i(\delta_i)$ is monotonically nondecreasing with respect to $\delta_i \ \ \forall \ i=1,\hdots, n$, and represents the cost of modifying the recovery rates of node $v_i$ to some $\delta_i$ within the feasible interval to contain the spread of an epidemic outbreak. 

We focus on a \textit{rate-constrained} resource allocation problem, in which the objective is to find the cost-optimal, local allocation of vaccines and antidotes to achieve a given network-wide exponential decay rate in the probability of infection. That is, given a desired infection decay rate $\overline{\varepsilon}$, the objective is to \textit{locally} allocate resources to each node $v_i$ such that $\left\Vert \mathbf{p}\left(t\right)\right\Vert \leq\left\Vert \mathbf{p}\left(0\right)\right\Vert K\exp\left(-\overline{\varepsilon}t\right)$,
$K>0$; where $\mathbf{p} = (p_1, \hdots, p_n)^T$, via local computations and interactions with neighbors. 
The problem is formally stated below:

\begin{problem}
\label{Problem:decentralized Rate Constrained Allocation}
Given a directed network with associated adjacency matrix $A$, node cost functions $\{f_i(\beta_i), g_i(\delta_i)  \}_{i=1}^n$ and bounds on the infection and curing rates $ 0<\underline{\beta}_{i}\leq\beta_{i}\leq\bar{\beta}_{i}$, and $0<\underline{\delta}_{i}\leq\delta_{i}\leq\bar{\delta}_{i}$ respectively, and an exponential decay rate $\bar{\varepsilon} > 0$. Locally (via interaction between neighboring nodes), determine the cost-optimal distribution of vaccines and treatment resources to attain the desired decay rate.
\end{problem}
Mathematically, the objective is to:
\begin{align}
\underset{\left\{ \beta_{i},\delta_{i}\right\} _{i=1}^{n}}{\mbox{minimize }} & \sum_{i=1}^{n}f_{i}\left(\beta_{i}\right)+g_{i}\left(\delta_{i}\right)\label{eq:Rate-Constrain Spectral Problem}\\
\mbox{subject to } & \mathbb{R}(\lambda_{1}\left(B
A - D \right))\leq-\bar{\varepsilon},\label{eq:Spectral constraint}\\
 & \underline{\beta}_{i}\leq\beta_{i}\leq\overline{\beta}_{i},\label{eq:Square constraint beta}\\
 & \underline{\delta}_{i}\leq\delta_{i}\leq\overline{\delta}_{i},\mbox{ }i=1,\ldots,n,\label{eq:Square constraint delta}
\end{align}
where (\ref{eq:Rate-Constrain Spectral Problem}) is the total investment across all agents,
(\ref{eq:Spectral constraint}) constrains the decay rate to $\bar{\varepsilon}$,
and (\ref{eq:Square constraint beta})-(\ref{eq:Square constraint delta})
maintain the infection and recovery rates in their feasible limits.
Our goal is to solve \eqref{eq:Rate-Constrain Spectral Problem}-\eqref{eq:Square constraint delta} in a fully distributed manner, with the computation of investment per node $f_i(\beta_i) + g_i(\delta_i)$, carried out locally.
After presenting a convex characterization of the problem, we illustrate how it decomposes nicely for a distributed solution.

\section{The Resource Allocation Problem as a GP}
\label{sec:convex_character}
In this section, we formulate the resource allocation problem in \eqref{eq:Rate-Constrain Spectral Problem}-\eqref{eq:Square constraint delta} as a Geometric Program and present its convex characterization.
Geometric Programs (GPs) are a class of nonlinear, nonconvex optimization problems that can be transformed into convex optimization problems and efficiently solved using interior-point methods, yielding globally optimal solutions  \cite{boyd2007tutorial}. Building blocks of a GP are monomial and posynomial functions. Let the vector $\mathbf{x}\triangleq (x_1, \hdots, x_n) \in \mathbb{R}^n_{++}$ denote n decision variables. A monomial functions (of the variables $x_1,\hdots, x_n$) is a real-valued function $f(\mathbf{x}) = c x_1^{a_1}  x_2^{a_2} \hdots x_n^{a_n},$ where $c>0$ is the coefficient of the monomial and $a_i \in \mathbb{R}$ are exponents of the monomial. Posynomial functions are sums of monomials; that is, a function of the form $f(\mathbf{x})  = \sum_{k=1}^K c_k x_1^{a_{1k}}  x_2^{a_{2k}} \hdots x_n^{a_{nk}},$ where $c_k > 0$, for $k=1, 2, \hdots, K$ and $a_{jk} \in \mathbb{R}$ for $j = 1, \hdots, n$ and $k = 1, \hdots, K$. For example, while $x_1+x_2^4$ and $x_1x_2^{0.3}\pi$ are posynomials, $x_1 - x_2^2$ is not a posynomial.  \quad A GP in standard form is one in which a posynomial function is minimized subject to posynomial upper bound inequality constraints and monomial equality constraints. More formally, it is of the form
\begin{equation}
\begin{aligned}
& \underset{}{\text{minimize}}
& & f_0(\mathbf{x}) \\
& \text{subject to}
& & f_i(\mathbf{x}) \leq 1, \ \ i=1, \hdots, m, \label{gpstand}\\
& & & g_i(\mathbf{x}) = 1, \ \ i=1, \hdots, p, \\
\end{aligned}
\end{equation}
where $f_i$ are posynomial functions and $g_i$ are monomials. GPs in standard form are not convex optimization problems, since posynomials are not convex functions. However, with a logarithmic change of variables and multiplicative constants: $y_i = \log x_i, \ \ b_l = \log c_l, \ \ b_{ik} = \log c_{ik}$ and a logarithmic change of the functions' values, we can transform \eqref{gpstand} to the following equivalent problem in the variable $\mathbf{y}$:
\begin{equation}
\begin{aligned}
& \underset{}{\text{minimize}}
& & h_0(\mathbf{y})  = \log \sum_{k=1}^{K_0} \exp(\mathbf{a}_{0k}^T\mathbf{y}+b_{0k}) \\
& \text{subject to}
& & h_i(\mathbf{y}) = \log \sum_{k=1}^{K_i} \exp(\mathbf{a}_{ik}^T\mathbf{y}+b_{ik})\leq 0, \ \ \forall \ i  \label{gpconvex}\\
& & & q_i(\mathbf{y}) = \mathbf{a}_l^T\mathbf{y} + b_l = 0, \ \ l=1, \hdots, M. \\
\end{aligned}
\end{equation}
Problem \eqref{gpconvex} is convex and can be efficiently solved in polynomial time. (See \cite{boyd2007tutorial} for a more detailed exposition on GPs). Hence, if our cost function is a posynomial (or more generally, convex in log scale), and if the constraints can be represented as monomial and posynomial functions, \eqref{eq:Rate-Constrain Spectral Problem}-\eqref{eq:Square constraint delta} can be formulated as a GP in convex form.

Via the Perron-Frobenius lemma, from the theory of nonnegative matrices, we express the spectral constraint \eqref{stability_constraint} in an equivalent form to transform it to a set of posynomial constraints in the decision variables.

\begin{lemma}
\label{lem:Perron-Frobenius}(Perron-Frobenius) Let $M$ be a nonnegative,
irreducible matrix. Then, the following statements about its spectral
radius, $\rho\left(M\right)$, hold:
\begin{enumerate}
\item $\rho\left(M\right)>0$ is a simple eigenvalue of
$M$,

\item $M\mathbf{u}=\rho\left(M\right)\mathbf{u}$, for
some $\mathbf{u}\in\mathbb{R}_{++}^{n}$, and

\item $\rho\left(M\right)=\inf\left\{ \lambda\in\mathbb{R}:M\mathbf{u}\leq\lambda\mathbf{u}\mbox{ for }\mathbf{u}\in\mathbb{R}_{++}^{n}\right\} $. \label{perron-equiv}
\end{enumerate}
\end{lemma}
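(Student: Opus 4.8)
The plan is to prove the three claims first for a strictly \emph{positive} matrix and then transfer the conclusions to the irreducible case. The bridge is the elementary observation that, since $M\ge 0$ is irreducible, its associated digraph is strongly connected, which is equivalent to $P:=(I+M)^{n-1}$ being entrywise strictly positive. Because $P$ is a polynomial in $M$, the two matrices commute and every eigenvector of $M$ is an eigenvector of $P$, with $M\mathbf{u}=\lambda\mathbf{u}$ implying $P\mathbf{u}=(1+\lambda)^{n-1}\mathbf{u}$; since $t\mapsto(1+t)^{n-1}$ is strictly increasing on $[0,\infty)$, establishing the Perron structure for $P$ will pin it down for $M$. I would carry out the three claims for a positive $P$ and lift at the end.

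For existence (claim 2 and positivity), I would apply Brouwer's fixed-point theorem to the continuous self-map $T(\mathbf{x})=P\mathbf{x}/(\mathbf{1}^T P\mathbf{x})$ on the standard simplex $\Delta=\{\mathbf{x}\ge 0:\mathbf{1}^T\mathbf{x}=1\}$. Positivity of $P$ guarantees $P\mathbf{x}>0$ for every $\mathbf{x}\in\Delta$, so $T$ is well defined and maps the compact convex set $\Delta$ into its interior; a fixed point $\mathbf{u}$ then satisfies $P\mathbf{u}=\lambda\mathbf{u}$ with $\lambda=\mathbf{1}^T P\mathbf{u}>0$ and $\mathbf{u}>0$. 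To identify $\lambda=\rho(P)$, I would take an arbitrary eigenpair $(\mu,\mathbf{z})$, use the triangle inequality to get $|\mu|\,|\mathbf{z}|\le P|\mathbf{z}|$ entrywise, and pair against the left Perron eigenvector $\mathbf{v}>0$ (obtained by applying the same argument to $P^T$): this yields $|\mu|\,\mathbf{v}^T|\mathbf{z}|\le\lambda\,\mathbf{v}^T|\mathbf{z}|$, and since $\mathbf{v}^T|\mathbf{z}|>0$ we conclude $|\mu|\le\lambda$, so $\lambda=\rho(P)$.

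Claim 1 splits into geometric and algebraic simplicity. For geometric simplicity I would argue that if $\mathbf{w}$ were a real eigenvector for $\rho(P)$ independent of $\mathbf{u}$, one could choose $c$ so that $\mathbf{u}-c\mathbf{w}\ge 0$ with at least one zero entry; being an eigenvector, $P(\mathbf{u}-c\mathbf{w})=\rho(P)(\mathbf{u}-c\mathbf{w})$ would be strictly positive unless $\mathbf{u}-c\mathbf{w}=0$, because a positive matrix sends any nonzero nonnegative vector to a strictly positive one, contradicting the zero entry. For algebraic simplicity I would rule out a generalized eigenvector $\mathbf{y}$ with $P\mathbf{y}=\rho(P)\mathbf{y}+\mathbf{u}$ by pairing with $\mathbf{v}$: this forces $\mathbf{v}^T\mathbf{u}=0$, impossible since $\mathbf{v},\mathbf{u}>0$. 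Finally, claim 3 is the Collatz–Wielandt characterization: the componentwise inequality $M\mathbf{u}\le\lambda\mathbf{u}$ for $\mathbf{u}>0$ is equivalent to $\max_i (M\mathbf{u})_i/u_i\le\lambda$, so the infimum in the statement equals $\inf_{\mathbf{u}>0}\max_i (M\mathbf{u})_i/u_i$. Plugging in the Perron vector of $M$ shows this infimum is $\le\rho(M)$, while pairing any admissible $\mathbf{u}$ against the positive left eigenvector of $M$ gives $\rho(M)\,\mathbf{v}^T\mathbf{u}=\mathbf{v}^T M\mathbf{u}\le\lambda\,\mathbf{v}^T\mathbf{u}$, hence $\rho(M)\le\lambda$ and the two sides coincide.

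I expect the main obstacle to be claim 1, specifically \emph{algebraic} simplicity, together with the need for the positivity reduction at the very start. The fixed-point argument genuinely requires $P\mathbf{x}$ to be nonvanishing on all of $\Delta$, which fails for a merely irreducible $M$ with zero entries, so the passage through $(I+M)^{n-1}$ is not cosmetic but essential; and the lift back to $M$ must invoke the simplicity of $\rho(P)$ (so that the commuting matrix $M$ acts on the one-dimensional Perron eigenspace as a scalar, giving $M\mathbf{u}=\rho(M)\mathbf{u}$). The left-eigenvector pairing is the device that makes both the spectral-radius identification and the algebraic-simplicity step go through cleanly, so I would set it up once and reuse it throughout.
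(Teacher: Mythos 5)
The paper does not actually prove this lemma: it is stated as a classical result from the theory of nonnegative matrices and used as a black box (its only role in the paper is to justify rewriting the spectral constraint $\mathbb{R}(\lambda_1(BA-D))\le-\bar\varepsilon$ as the posynomial inequality in Theorem \ref{thm:GP for rate constrained}, via part \ref{perron-equiv}). So there is no in-paper argument to compare against; what you have written is the standard textbook proof (positivity reduction through $(I+M)^{n-1}$, Brouwer fixed point on the simplex, left-eigenvector pairing, Collatz--Wielandt), and as an outline it is essentially sound. One point deserves more care than your sketch gives it: the monotonicity of $t\mapsto(1+t)^{n-1}$ on $[0,\infty)$ lets you transfer the Perron structure from $P$ to $M$ only over the \emph{real nonnegative} eigenvalues. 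For a complex eigenvalue $\lambda$ of $M$, the inequality $|1+\lambda|^{n-1}\le\rho(P)=(1+\mu)^{n-1}$ gives $|1+\lambda|\le 1+\mu$, which does not by itself imply $|\lambda|\le\mu$; to conclude that the eigenvalue $\mu$ attached to the positive eigenvector is in fact $\rho(M)$, you must run the left-eigenvector pairing $|\lambda|\,\mathbf{v}^T|\mathbf{z}|\le\mathbf{v}^T M|\mathbf{z}|=\mu\,\mathbf{v}^T|\mathbf{z}|$ directly on $M$ rather than on $P$. You have already built that device and declared it reusable, so this is a gap in the write-up rather than in the method, but it is exactly the step a careful referee would ask you to spell out. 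The remaining pieces (geometric simplicity via the extremal combination $\mathbf{u}-c\mathbf{w}$, algebraic simplicity by pairing a putative generalized eigenvector against $\mathbf{v}$, and both inequalities in the Collatz--Wielandt characterization of part \ref{perron-equiv}) are correct as stated.
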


Recall that our model assumes the contact network is directed, strongly connected. Since the adjacency matrix associated with a strongly connected, directed graph is irreducible, Lemma \ref{lem:Perron-Frobenius} holds for the spectral radius of the adjacency matrix of any positively weighted, strongly connected digraph. A corollary of Lemma \ref{lem:Perron-Frobenius} is the following:
\begin{corollary}
\label{cor:Eig equals Rad}Let $M$ be a nonnegative, irreducible
matrix. Then, its eigenvalue with the largest real part, $\lambda_{1}\left(M\right)$,
is real, simple, and equal to the spectral radius $\rho\left(M\right)>0$.
\end{corollary}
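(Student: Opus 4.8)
The plan is to derive the corollary directly from Lemma~\ref{lem:Perron-Frobenius} together with the definition of the spectral radius, so no new machinery is needed. First I would record what the lemma already supplies: by parts (1) and (2), $\rho(M)$ is a strictly positive, \emph{simple} eigenvalue of $M$, and in particular it is a \emph{real} eigenvalue. This immediately settles the claims ``real,'' ``simple,'' and ``$>0$'' \emph{provided} I can show that the eigenvalue with the largest real part, $\lambda_{1}(M)$, coincides with $\rho(M)$. Thus the entire content of the corollary reduces to proving the single identity $\lambda_{1}(M) = \rho(M)$.

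To establish this, I would invoke the definition of $\rho(M)$ as the maximum modulus over the spectrum of $M$. For every eigenvalue $\lambda$ of $M$ one has the elementary bound $\mathbb{R}(\lambda) \leq |\lambda| \leq \rho(M)$. Since $\rho(M)$ is itself an eigenvalue and is real and positive, its real part equals $\rho(M)$; hence the maximum of $\mathbb{R}(\lambda)$ taken over all eigenvalues is attained at $\rho(M)$, which gives $\mathbb{R}(\lambda_{1}(M)) = \rho(M)$ by the decreasing-real-part ordering adopted for the $\lambda_i$.

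The one point deserving a short argument is \emph{uniqueness} of the eigenvalue attaining this maximal real part, so that $\lambda_{1}(M)$ is not merely equal to $\rho(M)$ in real part but equal to it as a complex number. Here I would run a pinching argument: if some eigenvalue $\lambda$ satisfies $\mathbb{R}(\lambda) = \rho(M)$, then $\rho(M) = \mathbb{R}(\lambda) \leq |\lambda| \leq \rho(M)$ forces $|\lambda| = \mathbb{R}(\lambda) = \rho(M)$, which is possible only when $\lambda$ is real and nonnegative, i.e.\ $\lambda = \rho(M)$. Combined with the simplicity of $\rho(M)$ from part (1) of the lemma, this shows that $\lambda_{1}(M) = \rho(M)$ is the unique, simple, real, positive eigenvalue of maximal real part, completing the proof.

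I do not anticipate a genuine obstacle, since the corollary is essentially a repackaging of Perron--Frobenius; the only step that warrants care is the chain $\mathbb{R}(\lambda) \leq |\lambda| \leq \rho(M)$ together with the observation that equality throughout forces $\lambda = \rho(M)$, which is precisely what guarantees that the descriptions ``largest real part'' and ``spectral radius'' single out the same eigenvalue.
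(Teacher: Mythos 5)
Your proof is correct and takes the same route the paper intends: the paper states this corollary without any written proof, presenting it as an immediate consequence of Lemma~\ref{lem:Perron-Frobenius}, and your argument---combining parts (1) and (2) of that lemma with the pinching chain $\mathbb{R}(\lambda) \leq |\lambda| \leq \rho(M)$ to identify $\lambda_{1}(M)$ with $\rho(M)$---is precisely the standard derivation being implicitly invoked.
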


\noindent In \cite{preciado2013optimalgp}, based on Propositions \ref{prop:stab} and Lemma \ref{lem:Perron-Frobenius}, the following result established the formulation of \eqref{eq:Rate-Constrain Spectral Problem}-\eqref{eq:Square constraint delta} as a GP comprising monomial and posynomial functions.
\begin{theorem}(\cite{preciado2013optimalgp})
\label{thm:GP for rate constrained} Given a strongly connected graph $\mathcal{G}$ with adjacency matrix
$A=[A_{ij}]$, posynomial cost functions $\left\{ f_{i}\left(\beta_{i}\right),g_{i}\left(\delta_{i}\right)\right\} _{i=1}^{n}$, bounds on the infection and recovery rates $0<\underline{\beta}_{i}\leq\beta_{i}\leq\overline{\beta}_{i}$
and \textup{$0<\underline{\delta}_{i}\leq\delta_{i}\leq\overline{\delta}_{i}$},
$i=1,\ldots,n$, and a desired exponential decay rate $\overline{\varepsilon}$.
Then, the optimal investment on vaccines and antidotes for node $v_{i}$ to solve Problem \eqref{eq:Rate-Constrain Spectral Problem}-\eqref{eq:Square constraint delta}
are $f_{i}\left(\beta_{i}^{*}\right)$ and $g_{i}\left(\widetilde{\Delta}+1-\widetilde{\delta}_{i}^{*}\right)$,
where $\widetilde{\Delta}\triangleq\max\left\{ \overline{\varepsilon},\overline{\delta}_{i}\mbox{ for }i=1,\ldots,n\right\} $
and \textup{\emph{$\beta_{i}^{*}$,}}\textup{$\widetilde{\delta}_{i}^{*}$}\textup{\emph{
are the optimal solution for $\beta_{i}$ and $\widetilde{\delta}_{i}$
in the following GP}}:
\begin{align}
\underset{ \left\{ u_{i},\beta_{i},\widetilde{\delta}_{i},t_{i}\right\} _{i=1}^{n}}{\mbox{minimize}} & \sum_{k=1}^{n}f_{k}\left(\beta_{k}\right)+g_{k}\left(t_{k}\right)\label{eq:Budget-Constrained Spectral Problem-1-1}\\
\mbox{subject to } & \frac{\beta_{i}\sum_{j=1}^{n}A_{ij}u_{j}+\widetilde{\delta}_{i}u_{i}}{\left(\widetilde{\Delta}+1-\overline{\varepsilon}\right)u_{i}}\leq1,\label{eq:Spectral constraint in budget constrained}\\
 & \left(t_{i}+\widetilde{\delta}_{i}\right)\left/\left(\widetilde{\Delta}+1\right)\right.\leq1,\label{eq:t trick in rate constrained}\\
 & \widetilde{\Delta}+1-\overline{\delta}_{i}\leq\widehat{\delta}_{i}\leq\widetilde{\Delta}+1-\underline{\delta}_{i},\label{eq:delta constraint in rate constrained}\\
 & \underline{\beta}_{i}\leq\beta_{i}\leq\overline{\beta}_{i},\, i=1,\ldots,n.\label{eq:beta constraint in rate constrained}
\end{align}
\end{theorem}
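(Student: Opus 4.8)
The plan is to convert the spectral stability constraint \eqref{eq:Spectral constraint} into a set of posynomial inequalities by exploiting the Perron--Frobenius characterization, and then to eliminate the monotonicity obstruction in the $\delta_i$ variables via a change of variables. First I would observe that the matrix $BA-D$ is not nonnegative (its diagonal entries $-\delta_i$ are negative), so Lemma~\ref{lem:Perron-Frobenius} cannot be applied directly. To remedy this, I would shift the spectrum: for any scalar $c$, the eigenvalues of $BA-D+cI$ are exactly those of $BA-D$ shifted by $c$, so $\mathbb{R}(\lambda_1(BA-D))\le -\overline{\varepsilon}$ is equivalent to $\mathbb{R}(\lambda_1(BA-D+cI))\le c-\overline{\varepsilon}$. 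I would choose $c$ large enough (namely $c=\widetilde\Delta+1$ with $\widetilde\Delta\triangleq\max\{\overline{\varepsilon},\overline{\delta}_i\}$) that the shifted matrix $M\triangleq BA-D+(\widetilde\Delta+1)I$ has all nonnegative entries: the off-diagonal entries $\beta_i A_{ij}\ge 0$ are already nonnegative, and each diagonal entry becomes $\widetilde\Delta+1-\delta_i\ge 0$ by the upper bound $\delta_i\le\overline{\delta}_i\le\widetilde\Delta$. Since $\mathcal G$ is strongly connected, $A$ and hence $M$ is irreducible, so Corollary~\ref{cor:Eig equals Rad} applies and $\lambda_1(M)=\rho(M)$ is real.

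Next I would apply statement~\ref{perron-equiv} of Lemma~\ref{lem:Perron-Frobenius}: the constraint $\rho(M)\le\widetilde\Delta+1-\overline{\varepsilon}$ holds if and only if there exists a strictly positive vector $\mathbf u\in\mathbb R_{++}^n$ with $M\mathbf u\le(\widetilde\Delta+1-\overline{\varepsilon})\mathbf u$. Writing this row-wise gives, for each $i$,
\begin{equation}
\beta_i\sum_{j=1}^n A_{ij}u_j + (\widetilde\Delta+1-\delta_i)u_i \le (\widetilde\Delta+1-\overline{\varepsilon})u_i.\nonumber
\end{equation}
To turn the term $(\widetilde\Delta+1-\delta_i)$ into a single positive decision variable (so that the left side is a genuine posynomial in $\{\beta_i,u_i,\cdot\}$ with positive coefficients), I would introduce $\widetilde\delta_i\triangleq\widetilde\Delta+1-\delta_i$, which is positive and ranges over $[\widetilde\Delta+1-\overline{\delta}_i,\widetilde\Delta+1-\underline{\delta}_i]$, recovering \eqref{eq:delta constraint in rate constrained}. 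Dividing the displayed inequality by $(\widetilde\Delta+1-\overline{\varepsilon})u_i>0$ then yields exactly the posynomial constraint \eqref{eq:Spectral constraint in budget constrained}, since every monomial on the left has a positive coefficient and the $u_i$ in denominators appear with negative exponents, which is permitted in a posynomial.

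The remaining—and I expect most delicate—step concerns the objective. The cost $g_i(\delta_i)$ is monotonically nondecreasing in $\delta_i$, hence nonincreasing in the substituted variable $\widetilde\delta_i$; a posynomial must be nondecreasing in each of its variables, so $g_i$ written directly as a function of $\widetilde\delta_i$ need not be posynomial. I would resolve this with the standard epigraph ``$t$-trick'': introduce an auxiliary variable $t_i$ representing $\delta_i=\widetilde\Delta+1-\widetilde\delta_i$, minimize $\sum_k g_k(t_k)$ in place of $\sum_k g_k(\delta_k)$, and enforce the coupling through the posynomial inequality \eqref{eq:t trick in rate constrained}, namely $(t_i+\widetilde\delta_i)/(\widetilde\Delta+1)\le 1$, i.e.\ $t_i+\widetilde\delta_i\le\widetilde\Delta+1$, so that $t_i\le\delta_i$. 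Because $g_i$ is nondecreasing, at optimality this inequality binds and $t_i=\delta_i$, so the surrogate objective equals the true cost; thus the optimal treatment investment is recovered as $g_i(\widetilde\Delta+1-\widetilde\delta_i^{*})$. Finally I would note that $f_i,g_i$ are posynomial by hypothesis and the box constraints \eqref{eq:beta constraint in rate constrained} are themselves posynomial upper/lower bounds, so \eqref{eq:Budget-Constrained Spectral Problem-1-1}--\eqref{eq:beta constraint in rate constrained} is a GP in standard form; its solution, mapped back through $\delta_i^{*}=\widetilde\Delta+1-\widetilde\delta_i^{*}$, solves Problem~\ref{Problem:decentralized Rate Constrained Allocation}, completing the argument.
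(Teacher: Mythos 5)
The paper does not actually prove this theorem --- its ``proof'' is a pointer to Theorem~12 of \cite{preciado2013optimalgp} --- so your attempt has to be judged against the argument of that reference. The bulk of your reconstruction matches it: shifting the spectrum by $(\widetilde{\Delta}+1)I$ so that $M = BA - D + (\widetilde{\Delta}+1)I$ is nonnegative (diagonal entries $\beta_i A_{ii} + \widetilde{\Delta}+1-\delta_i \geq \widetilde{\Delta}+1-\overline{\delta}_i \geq 1$) and irreducible (since $\beta_i \geq \underline{\beta}_i > 0$ and $A$ is irreducible), invoking Corollary~\ref{cor:Eig equals Rad} and item~\ref{perron-equiv} of Lemma~\ref{lem:Perron-Frobenius} to replace the spectral condition by the existence of $\mathbf{u} \in \mathbb{R}^n_{++}$ with $M\mathbf{u} \leq (\widetilde{\Delta}+1-\overline{\varepsilon})\mathbf{u}$, writing this row-wise, and substituting $\widetilde{\delta}_i = \widetilde{\Delta}+1-\delta_i$ to obtain the posynomial constraint \eqref{eq:Spectral constraint in budget constrained} (note $\widetilde{\Delta}+1-\overline{\varepsilon} > 0$ by the definition of $\widetilde{\Delta}$, which you should say explicitly). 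This is exactly the route of the cited proof.

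The gap is in the step you yourself flag as most delicate. You write that because $g_i$ is nondecreasing, the constraint $t_i + \widetilde{\delta}_i \leq \widetilde{\Delta}+1$, i.e.\ $t_i \leq \delta_i$, \emph{binds} at optimality. That is backwards: when you minimize a nondecreasing function $g_i(t_i)$ subject only to an \emph{upper} bound on $t_i$, the minimizer drives $t_i$ toward its lower limit, the coupling constraint goes slack, and the surrogate term $g_i(t_i)$ decouples entirely from $\widetilde{\delta}_i$ --- so the GP's optimal value no longer equals the cost of the original problem and the claimed equivalence fails. The constraint can only bind if the objective is \emph{nonincreasing} in the upper-bounded variable. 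The resolution in the reference is that the quantity coupled through \eqref{eq:t trick in rate constrained} must enter the objective through the \emph{reversed} cost $\widehat{g}_i(x) \triangleq g_i(\widetilde{\Delta}+1-x)$, which is nonincreasing precisely because $g_i$ is nondecreasing in $\delta_i$ (and which is the function one must assume posynomial --- compare the example \eqref{eq:Quasiconvex Limit-2}, which is a posynomial in $1-\delta_i$, not in $\delta_i$). As written, your tightness argument does not go through, and since this is the step that turns a valid GP into one that actually solves Problem~\ref{Problem:decentralized Rate Constrained Allocation}, it is a genuine gap rather than a cosmetic one.
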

\begin{proof}
See Theorem $12$ in \cite{preciado2013optimalgp} for proof.
\end{proof}
For simplicity of notation and ease of reading, we re-express \eqref{eq:Budget-Constrained Spectral Problem-1-1}-\eqref{eq:beta constraint in rate constrained} as 
\begin{align}
\underset{ u_i,\beta_i, \delta_i,}{\mbox{minimize}} &  \quad \sum_{i=1}^n f_i(\beta_i) + g_i(\delta_i) \label{eq:refined problem} \\
\mbox{subject to } & \frac{\beta_{i}\sum_{j=1}^{n}A_{ij}u_{j} +  \delta_i u_i}{u_i}\leq 1, \\
 & 
 \underline{\delta} \leq \delta_i \leq \overline{\delta} \label{eq:new-delta}\\
& 
\underline{\beta} \leq \beta_i \leq \overline{\beta}, \quad i=1,\hdots, n; \label{eq:new-beta}
\end{align}
where the auxiliary variables $\widetilde{\Delta}, t_i, \widetilde{\delta}_i$ introduced in Theorem \ref{thm:GP for rate constrained} to express the spectral constraint as a set of posynomial functions have been factored into the upper and lower bounds on $\delta_i$ in \eqref{eq:new-delta}. The rest of the paper will focus on developing a distributed solution to  \eqref{eq:refined problem}-\eqref{eq:new-beta}.

\subsection{Separability of \eqref{eq:refined problem}-\eqref{eq:new-beta}}
To implement a distributed solution, first note that the cost function and constraint functions of the GP in \eqref{eq:refined problem}-\eqref{eq:new-beta} is separable per agent $v_i$. Each agent is able to locally solve the following problem:
\begin{align}
\underset{ u_i,\beta_i, \delta_i,}{\mbox{minimize}} &  \quad  f_i(\beta_i) + g_i(\delta_i) \label{eq:separated problem} \\
\mbox{subject to } & \frac{\beta_{i}\sum_{j=1}^{n}A_{ij}u_{j} +  \delta_i u_i}{u_i}\leq 1, \label{eq:coupling}\\
 & 
 \underline{\delta} \leq \delta_i \leq \overline{\delta} \\
& 
\underline{\beta} \leq \beta_i \leq \overline{\beta}.\label{eq:separated problem4}
\end{align}
Though separable per agent, not all the decision variables are local. In particular, for agent $v_i$ to minimize its cost function it needs the value $u_j$ from nodes in its neighborhood set. The need for $u_j$ in computing the optimum cost of node $v_i$ is explicit in \eqref{eq:coupling}. To address this problem, we employ the Alternating Direction Method of Multipliers algorithm, which is well suited for such distributed optimization problems. 

\section{Distributed solution}
\label{sec:distributed}

\subsection{Alternating Direction Method of Multipliers (ADMM)}
The ADMM algorithm is a dual-based method for solving constrained optimization problems in which an augmented Lagrangian function is minimized with respect to the primal variables, and the dual variables are updated accordingly. Recent surveys on augmented Lagrangian methods and the ADMM algorithm can be found in monographs by Schizas et al., \cite{schizas2008consensus}, Bertsekas \cite{bertsekas1982constrained} and a survey by Boyd et al. \cite{boyd2011distributed}; where illustrations and solutions to different optimization problems via the ADMM algorithm are presented. The ADMM algorithm works by decomposing the original optimization problem into subproblems that can be sequentially solved in parallel by each agent, allowing for distributed solutions to large-scale optimization problems. 


The standard ADMM solves the following problem
\begin{equation}
\begin{aligned}
& \underset{x,z}{\text{minimize}}
& & f(x) + g(z) \\
& \text{subject to}
& & Ax + Dz = c, \label{admm:standard}
\end{aligned}
\end{equation}
where the variables $x\in \mathbb{R}^n, \ z\in \mathbb{R}^m$, matrices $A\in \mathbb{R}^{p \times n}, \ D\in \mathbb{R}^{p\times m}$, and $c\in \mathbb{R}^p$. The Augmented Lagrangian function for \eqref{admm:standard} is given by
\begin{equation}
L_{\rho}(x,z,\lambda) = f(x) + g(z) - \lambda^T(Ax + Dz - c) + \frac{\rho}{2} \|Ax + Dz - c \|^2,
\end{equation}
where $\lambda$ is the Lagrange multiplier associated with the constraint $Ax + Dz = c$ and $\rho$ is a positive scalar. The update rules for the variables $x, \ z$ and $\lambda$ in the ADMM implementation is given by 
\begin{align}
x(k+1)  &= \text{arg}\min_x L_{\rho}(x,z(k),\lambda(k))  \label{eq:updates} \\
z(k+1) &= \text{arg}\min_z  L_{\rho}(x(k+1),z,\lambda(k))  \label{eq:updates1}\\
\lambda(k+1) &= \lambda(k) - \rho\left((Ax(k+1) + Dz(k+1) - c\right) \label{eq:updates2}
\end{align}
The updates in \eqref{eq:updates} - \eqref{eq:updates2} is similar to those of dual descent algorithms (see \cite{bertsekas1989parallel} for instance), except that augmented Lagrangian is used and penalty parameter $\rho$ is used as the step size in the dual updates. 


\subsection{Resource Allocation via D-ADMM}
Our goal is to present a distributed solution to the GP in \eqref{eq:separated problem}-\eqref{eq:separated problem4}. To make \eqref{eq:separated problem}-\eqref{eq:separated problem4} amenable to a distributed solution via the ADMM, we introduce variables $\mathbf{u}_i \in \mathbb{R}^n$ representing a local copy of the global variable $\mathbf{u} = (u_1, \hdots, u_n)^T$ in \eqref{eq:coupling} at each node $v_i$. We also introduce an auxiliary variable $\mathbf{z}_{ij}$, that enables communication and enforces consensus in the values of $\mathbf{u}_i$ and $\mathbf{u}_j$ for all neighboring nodes $(v_i,v_j)\in \mathcal{E}$. We interpret the auxiliary variables $\mathbf{z}_{ij}$ as being associated with the edge $(v_i, v_j)$ with the goal of enforcing consensus of the variables $\mathbf{u}_i$ and $\mathbf{u}_j$ of its adjacent nodes $v_i$ and $v_j$. With these new variables, and consensus constraint, we can reformulate \eqref{eq:separated problem}-\eqref{eq:separated problem4} as
\begin{align}
\underset{ \mathbf{u}_i,\beta_i, \delta_i,}{\mbox{minimize}} &  \quad  f_i(\beta_i) + g_i(\delta_i) \nonumber \\
\mbox{subject to } & \frac{\beta_{i}\sum_{j=1}^{n}A_{ij}\mathbf{u}_i^j +  \delta_i \mathbf{u}^i_i}{\mathbf{u}^i_i}\leq 1, \label{eq:separated-problem1} \\
&  \prod_{j=1}^n \mathbf{u}_i^j = 1, \nonumber \\
 & 
 \underline{\delta} \leq \delta_i \leq \overline{\delta} \nonumber \\
& 
\underline{\beta} \leq \beta_i \leq \overline{\beta}. \nonumber \\
 & \mathbf{u}_i = \mathbf{z}_{ij} \quad and \quad \mathbf{u}_j = \mathbf{z}_{ij}, \ \ (v_i,v_j) \in \mathcal{E}, \nonumber
\end{align}
where the scalar $\mathbf{u}_i^j$ is the $j$th entry of the local estimate $\mathbf{u}_i$ at node $v_i$. 
The constraints $\mathbf{u}_i = \mathbf{z}_{ij}$ and $\mathbf{u}_j = \mathbf{z}_{ij}$ imply that for all pairs of agents $(v_i,v_j)$ that form an edge, the feasible set of \eqref{eq:separated-problem1} is such that $\mathbf{u}_i =\mathbf{u}_j$. 
 If the network is strongly connected, these local consensus constraints imply that feasible solutions must satisfy $\mathbf{u}_i =\mathbf{u}_j$ for all, not necessarily neighboring, pairs of agents $v_i$ and $v_j$.\footnote{In contrast to existing literature on  Distributed ADMM algorithm \cite{ling2013decentralized} \cite{wahlberg2012admm}, where only consensus of local estimates is required, the distributed resource allocation problem we consider in addition to being constrained locally, requires consensus in local estimates.} Normalization of the vectors $\mathbf{u}_i$ for $i=1,\hdots,n$ is to ensure that the local estimates $\mathbf{u}_i$ have the same direction.

To compute the augmented Lagrange of \eqref{eq:separated-problem1}, let the dual variable $\alpha_{ij}$ and $\gamma_{ij}$ be associated with equality constraints $\mathbf{u}_i =\mathbf{z}_{ij}$, and  $\mathbf{u}_j =\mathbf{z}_{ij}, \ \forall \ (v_i,v_j)\in \mathcal{E}$ respectively. In the iterations of the D-ADMM algorithm, the dual variable updates are:
\begin{eqnarray}
\alpha_{ij}(k+1) = \alpha_{ij}(k) + \frac{\rho}{2}(\mathbf{u}_i(k) - \mathbf{u}_j(k)) \ \forall \ j\in N(i) \\
\gamma_{ij}(k+1) = \gamma_{ij}(k) + \frac{\rho}{2}(\mathbf{u}_j(k) - \mathbf{u}_i(k)) \ \forall \ j\in N(i).
\end{eqnarray}
Let $\Gamma_i(k+1) \triangleq \beta_i(k+1), \delta_i(k+1), \mathbf{u}_i(k+1)$. Further, let $\phi_i(k) \triangleq \sum_{j\in N(i)} (\alpha_{ij}(k) + \gamma_{ji}(k)) \ \forall \ v_i\in V$. Given $\mathbf{u}_i(0)\in \mathbb{R}^n$ and $\phi_i(0) = \textbf{0}$, the iterative computations and updates are summarized in Algorithm \ref{admm-algorithm1}.
\alglanguage{pseudocode}
\begin{algorithm}[h]
\small
\caption{Distributed ADMM} 
\label{admm-algorithm1}
\begin{algorithmic}[1]
\State \textbf{Given} initial variables $\beta_i(0), \delta_i(0) \in \mathbb{R}, \ \ \mathbf{u}_i(0),  \phi_i(0) \in \mathbb{R}^n$ for each agent $v_i \in \mathcal{V}$. \\ \textbf{Set} $k=1$
\Repeat
\State For all $v_i\in \mathcal{V}$
\begin{equation}
\phi_i(k+1) = \phi_i(k) + \rho\sum_{j\in N(i)} (\mathbf{u}_i(k) - \mathbf{u}_j(k))
\end{equation}
\begin{equation} 
\begin{aligned}
\Gamma_i(k+1) & = \text{arg}\min_{\beta_i, \delta_i, \mathbf{u}_i} \ f_i(\beta_i) + g_i(\delta_i) + \phi_i^T \mathbf{u}_i \\
& + \rho \sum_{j\in N(i)} \|\mathbf{u}_i - \frac{\mathbf{u}_i(k) + \mathbf{u}_j(k)}{2} \|_2^2 \\
\mbox{subject to } & \frac{\beta_{i}\sum_{j=1}^{n}A_{ij}\mathbf{u}_i^i +  \delta_i \mathbf{u}_i^j}{\mathbf{u}_i^i}\leq 1, \label{eq:separated problem1} \\
&  \prod_{j=1}^n 
\mathbf{u}_i^j = 1, \nonumber \\ 
 & 
 \underline{\delta} \leq \delta_i \leq \overline{\delta} \nonumber \\
& 
\underline{\beta} \leq \beta_i \leq \overline{\beta}. \nonumber \\
\end{aligned}
\end{equation}
\State \textbf{Set} $k = k+1$
\Until $\sum_{i=1}^n \sum_{j\in N(i)} \|\mathbf{u}_i (k) - \mathbf{u}_j(k)\| \leq \eta$, for $\eta$ arbitrarily small. 

\Statex
\end{algorithmic}
  \vspace{-0.4cm}%
\end{algorithm}


\section{Numerical Simulations}
\label{sec:experiments}
In this section, we illustrate performance of the D-ADMM Algorithm \ref{admm-algorithm1} on two strongly connected directed networks, and briefly  discuss its convergence. As a proof of concept, and to show that functional correctness of Algorithm \ref{admm-algorithm1}, we will show that the investment in vaccines and antidotes for all agents in the network converge to the solution obtained in the centralized case. Further, we will show convergence of the of the dual variables $\phi_i$ for all agents $v_i$. 

The D-ADMM algorithm was used to solve Problem \ref{Problem:decentralized Rate Constrained Allocation} on a $8$-node network with the following parameters: epidemic threshold $\tau_c = (1-\overline{\delta})/\rho(A)$; $\overline{\beta} = 2\tau_c, \ \  \underline{\beta} = 0.2 \overline{\beta}$ and $\overline{\delta} = 0.9,  \ \ \underline{\delta} = 0.5$. These parameters were chosen in such a way as to ensure the DFE is unstable in the absence of any investment in vaccines and/or antidotes, and stable otherwise. We normalize the investment in vaccines and antidotes using the following quasi-convex functions
\begin{equation}\label{eq:Quasiconvex Limit-2}
f_{i}\left(\beta_{i}\right)=\frac{\beta_{i}^{-1}-\bar{\beta}_{i}^{-1}}{\underline{\beta}_{i}^{-1}-\bar{\beta}_{i}^{-1}},\: \quad g_{i}\left(\delta_{i}\right)=\frac{\left(1-\delta_{i}\right)^{-1}-\left(1-\underline{\delta}_{i}\right)^{-1}}{\left(1-\overline{\delta}_{i}\right)^{-1}-\left(1-\underline{\delta}_{i}\right)^{-1}}.
\end{equation}
The functions are such that for $\beta_i = \overline{\beta}_i$,  $f_i(\overline{\beta}_i) = 0$ and for $\beta_i = \underline{\beta}_i$,  $f_i(\underline{\beta}_i) = 1$. Similarly, for $\delta_i = \overline{\delta}_i$,  $g_i(\overline{\delta}_i) = 1$ and for $\delta_i = \underline{\delta}_i$,  $g_i(\underline{\delta}_i) = 0$. The Lagrangian penalty parameter $\rho$ was chosen to be $4$.

We illustrate the distributed solution on an $8$-node directed network, where the probability of a directed edge between two points is $0.32$. The values of $\underline{\delta}= 0.025$ and $\overline{\delta} = 0.750$ and $\underline{\beta} = 0.1142$ and $\overline{\beta}= 0.4393$. Since a feasibility constraint is $\rho(BA - D) < 1$, the upper and lower bounds of $\beta_i$ and $\delta_i$ are such that\footnote{Specifically, the upper and lower bounds of $\beta_i$ and $\delta_i$ were chosen as follows and based on the spectrum of $A$. We have $\overline{\delta} = 0.8$, \ $\underline{\delta} = 3.9\times \frac{2}{10}$, $\tau_c = (2/10)/\lambda_1(A)$, $\overline{\beta} = 4 \tau_c$, $\underline{\beta} = 30\% \overline{\beta}$, all chosen in a way that ensures infeasibility of a solution when all agents are assigned the maximum or minimum possible infection or recovery rates.} when $\beta_i = \underline{\beta}$ and $\delta_i = \underline{\delta}$, across all agents in the network, $\rho(BA -D) = 0.4600$. Further, when $\beta_i = \overline{\beta}$ and $\delta_i = \overline{\delta}$, $\rho(BA -D) = 1.6$. And when $\beta_i = \underline{\beta}$ and $\delta_i = \overline{\delta}$, $\rho(BA -D) = 1.04$. Finally, when $\beta_i = \overline{\beta}$ and $\delta_i = \underline{\delta}$, $\rho(BA -D) = 1.02$. These bounds ensure that all agents in the network are not easily allocated resources to yield the minimum possible infection rate or the maximum possible recovery rates in the network, since the epidemic control criterion will be violated.
%
With the above parameters, the optimal solution to the centralized problem using the cost functions in \eqref{eq:Quasiconvex Limit-2} was $1.9731$. As can be seen in Figure \ref{fig:cost}, we observe convergence of the total investment in the distributed solution to that obtained in the centralized solution. Figure \ref{fig:conv_of_u} illustrates the convergence of errors in the local estimates $\mathbf{u}_i$ across all agents in the network. It shows that the local estimates $\mathbf{u}_i$ of the global variable at node each node $v_i$ converges to those of their neighbors and the agents reach consensus in their estimates. As illustrated in Figure \ref{fig:dual_phi}, the dual variables $\phi_i$ for all nodes converge. 

\begin{figure*}
\minipage{0.33\textwidth}
  \includegraphics[width=\linewidth]{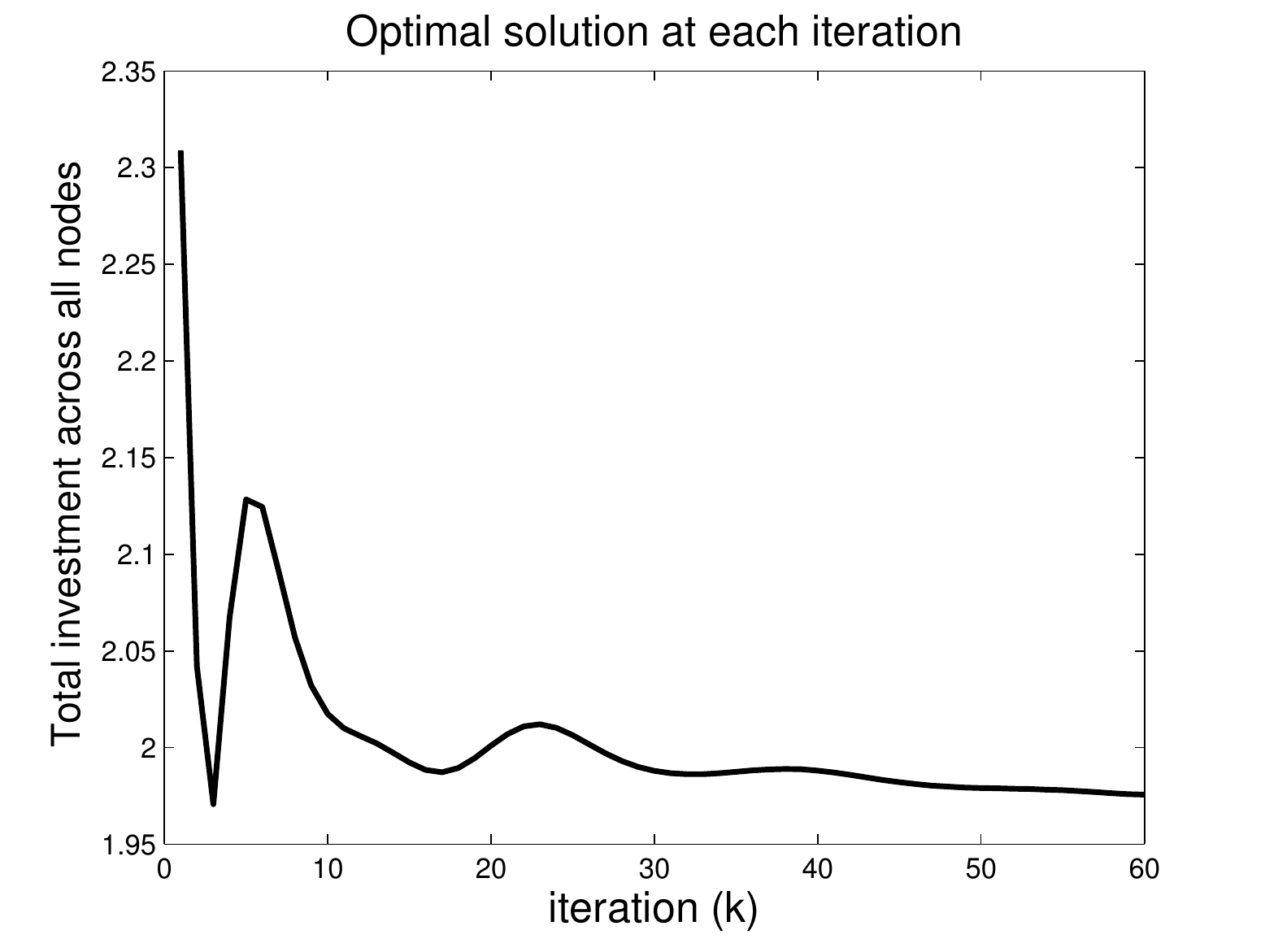}
  \caption{Convergence of Optimal solution}\label{fig:cost}
\endminipage\hfill
\minipage{0.33\textwidth}
  \includegraphics[width=\linewidth]{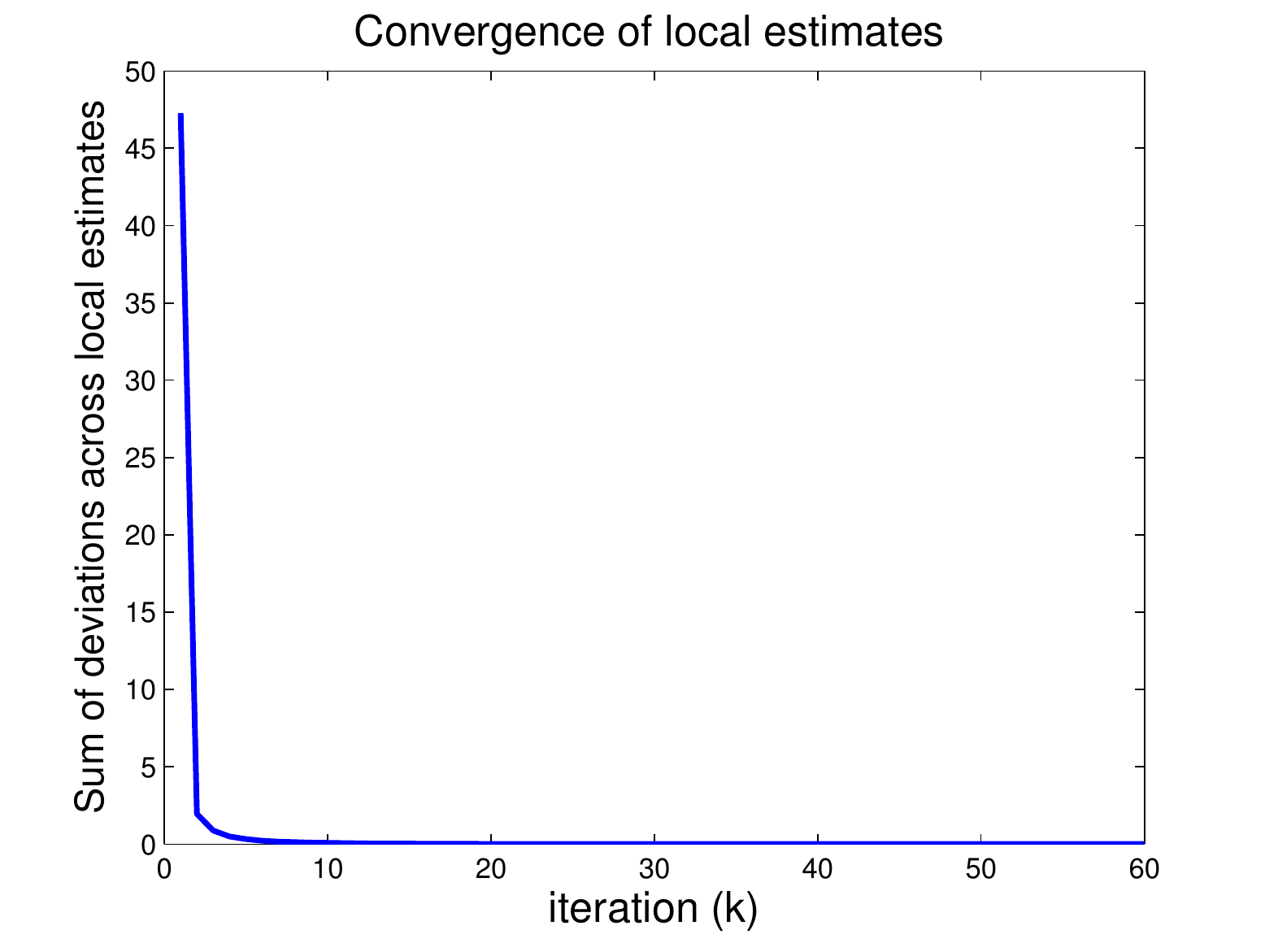}
  \caption{Error convergence of global estimates}\label{fig:conv_of_u}
\endminipage\hfill
\minipage{0.33\textwidth}%
  \includegraphics[width=\linewidth]{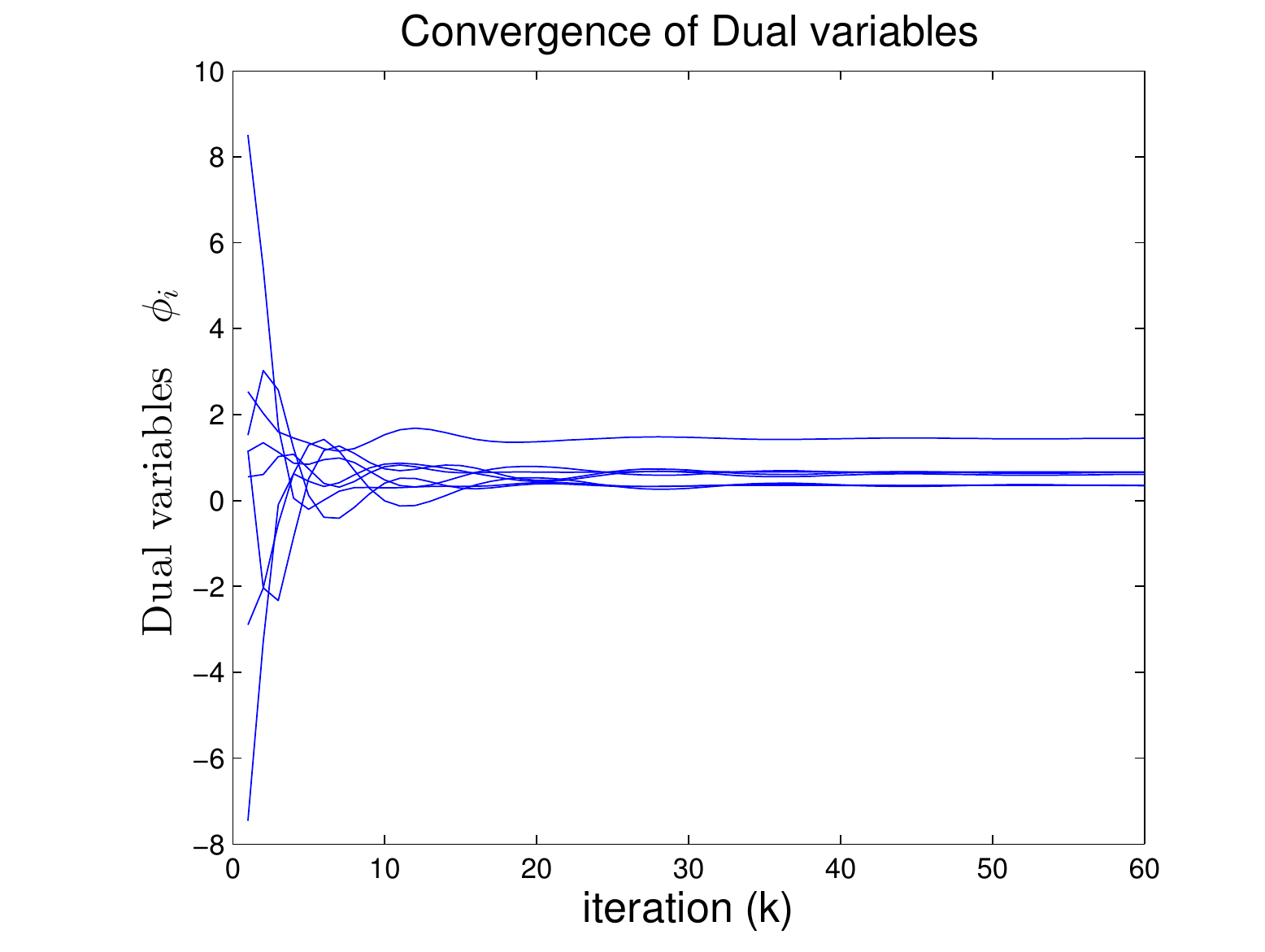}
  \caption{Convergence of dual variable}\label{fig:dual_phi}
\endminipage
\end{figure*}

We illustrate the solution on a fairly larger strongly connected network comprising $20$ nodes. As was done earlier, the values $\underline{\delta}= 0.025$, $\overline{\delta} = 0.750$, $\underline{\beta} = 0.0641$ and $\overline{\beta}= 0.2464$ are chosen such that when $\beta_i = \underline{\beta}$ and $\delta_i = \underline{\delta}$, across all agents in the network, $\rho(BA -D) = 0.3500$. Further, when $\beta_i = \overline{\beta}$ and $\delta_i = \overline{\delta}$, $\rho(BA -D) = 2.000$. And when $\beta_i = \underline{\beta}$ and $\delta_i = \overline{\delta}$, $\rho(BA -D) = 1.0750$. Finally, when $\beta_i = \overline{\beta}$ and $\delta_i = \underline{\delta}$, $\rho(BA -D) = 1.2750$. 

\begin{figure*}
\minipage{0.33\textwidth}
  \includegraphics[width=\linewidth]{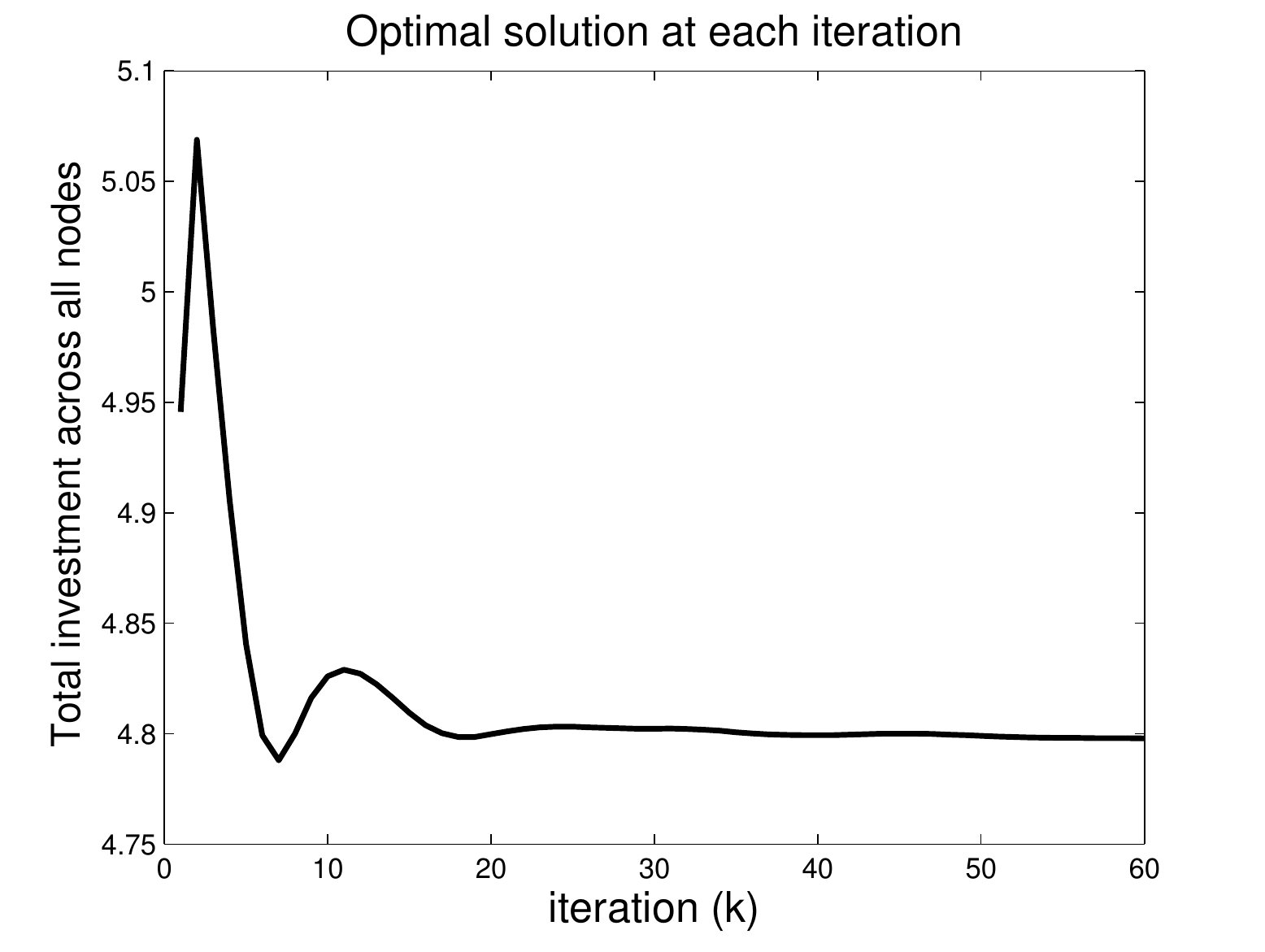}
  \caption{Convergence of Optimal solution}\label{fig:cost2}
\endminipage\hfill
\minipage{0.33\textwidth}
  \includegraphics[width=\linewidth]{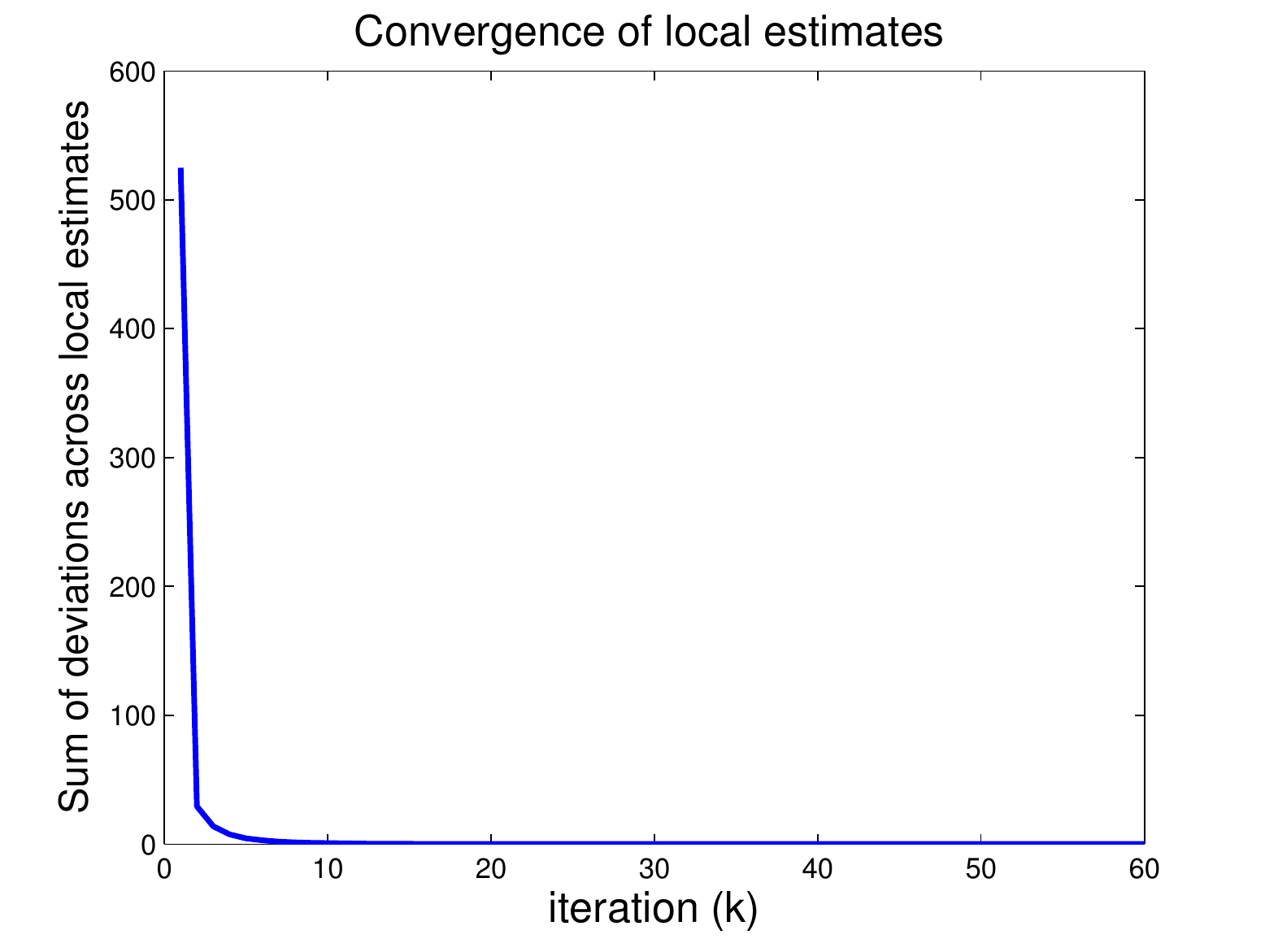}
  \caption{Error convergence of global estimates}\label{fig:conv_of_u2}
\endminipage\hfill
\minipage{0.33\textwidth}%
  \includegraphics[width=\linewidth]{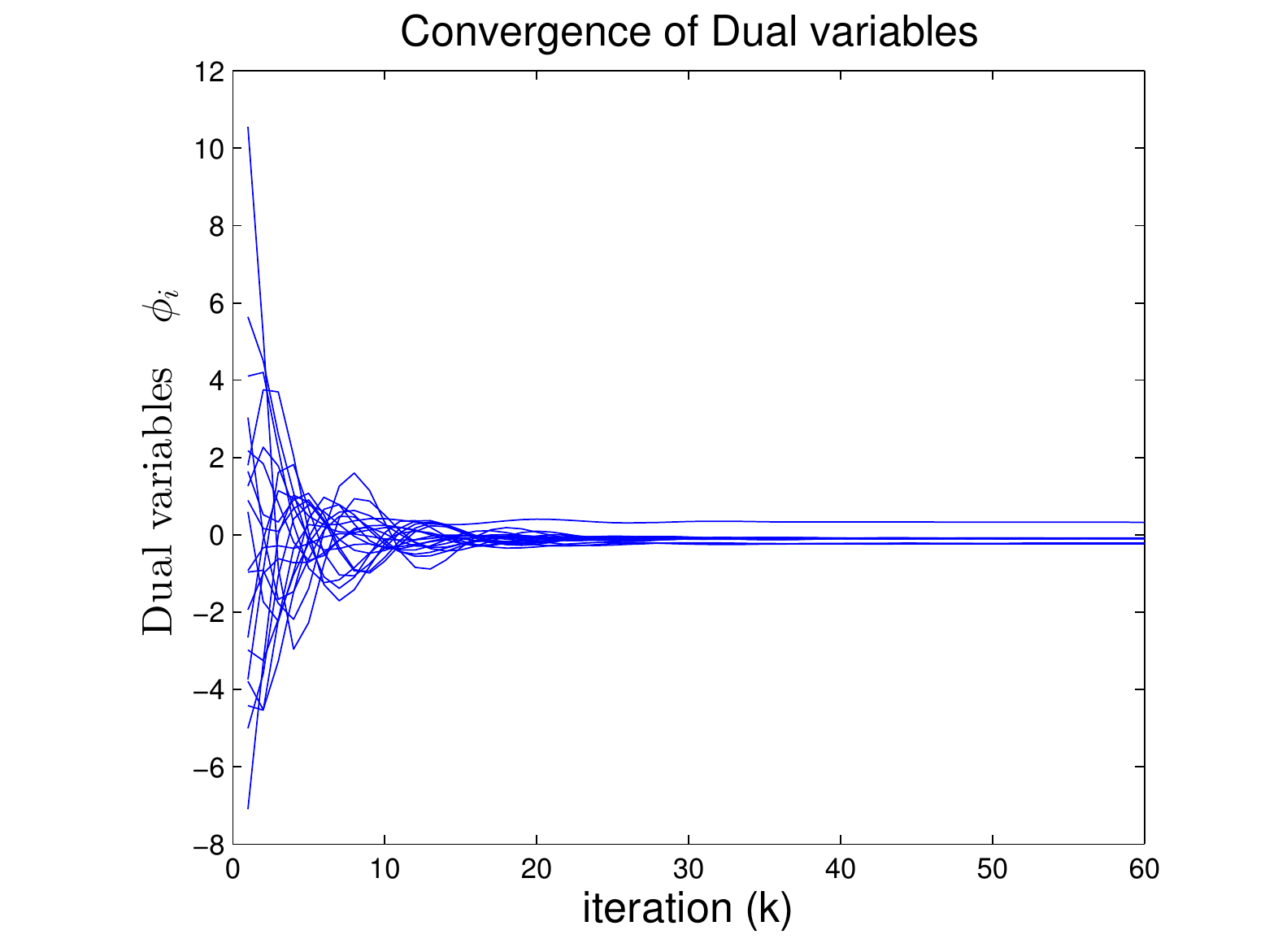}
  \caption{Convergence of dual variable}\label{fig:dual_phi2}
\endminipage
\end{figure*}
The total investment obtained from the centralized solution for the $20$-node network was $4.8868$. In Figure \ref{fig:cost2}, we find that the total investment obtained from the distributed solution converges to that obtained in the centralized solution. We find that the local estimates $\mathbf{u}_i$ of the nodes reach consensus as illustrated in Figure \ref{fig:conv_of_u2}, where we show convergence of the errors in estimates to zero.
Finally, the dual variables, do indeed, converge as illustrated in Figure \ref{fig:dual_phi2}.

\subsection{On convergence of the D-ADMM Algorithm \ref{admm-algorithm1}} 
The ADMM algorithm applied to distributed optimization problems have typically considered distributed problems where the only constraint is consensus in the local estimates of the agents; for example \cite{boyd2011distributed,ling2013decentralized, chang2014multi}. With just a consensus constraint and smoooth, differentiable convex cost, explicit computation of the local decision variables is possible. This allows for analytical expression of the optimal iterates, which enables convergence rate analysis of the algorithm. 

The problem considered in this paper, however, is a constrained optimization problem where, in addition to the consensus constraint, each agent also has three local constraints to satisfy to achieve a feasible solution at each iteration of the algorithm as specified in \eqref{eq:separated-problem1}. This informed the use of numerical solvers for computing the optimal local variables at each node, as done in Algorithm \ref{admm-algorithm1}. 

\begin{remark}\label{rem:lse-convex}
It is known that ADMM algorithm converges when applied to convex problems \cite{boyd2011distributed,ling2013decentralized}. The convex characterization of our resource allocation problem via GP presented in Section \ref{sec:convex_character} guarantees that the D-ADMM solution in Algorithm \ref{admm-algorithm1} converges. The use of numerical solvers in computing the local optimal solution at each node (in line $4$ of Algorithm \ref{admm-algorithm1}) makes it difficult carrying out a convergence rate analysis. 
\end{remark}

\section{Summary}
\label{sec:summary}
In this paper, we proposed a fully distributed solution to the problem of optimally allocating vaccine and antidote investment to control an epidemic outbreak in a networked population at a desired rate. The proposed solution was a D-ADMM algorithm, enabling each node to locally compute it's optimum investment in vaccine and antidotes needed to globally contain the spread of an outbreak, via local exchange of information with its neighbors. In contrast to previous literature, our problem is a constrained optimization problem associated with a directed network comprising non-identical agents.  Since numerical solvers are used to solve convex subproblems at each node, a convergence rate analysis of the D-ADMM algorithm is not presented. However, it is known that the ADMM algorithm converges for convex problems \cite{boyd2011distributed}; further, illustration of our results via simulations in Section \ref{sec:experiments} show that that the D-ADMM algorithm converges. The proposed distributed solution to the vaccine and antidote allocation problem for epidemic control presents a framework to contain outbreak in the absence of a central social planner.


\end{document}